\newtheorem{theorem}{Theorem}[section]
\newtheorem{lemma}[theorem]{Lemma}
\newtheorem{example}[theorem]{Example}
\newcommand{\Fq}{\mathbb{F}_q}
\newcommand{\Fp}{\mathbb{F}_p}
\newcommand{\Fpm}{\mathbb{F}_{p^m}}
\newcommand{\Fptwo}{\mathbb{F}_{p^2}}
\newcommand{\Frtwo}{\mathbb{F}_{r^2}}
\newcommand{\Q}{\mathbb{Q}}
\newcommand{\Ba}{\mathbf{a}}
\newcommand{\Bb}{\mathbf{b}}
\newcommand{\Bc}{\mathbf{c}}
\newcommand{\Be}{\mathbf{e}}
\newcommand{\Bx}{\mathbf{x}}
\newcommand{\By}{\mathbf{y}}
\newcommand{\ca}{\mathcal}
\newcommand{\cC}{\mathcal{C}}
\newcommand{\cD}{\mathcal{D}}
\newcommand{\cG}{\mathcal{G}}
\newcommand{\cH}{\mathcal{H}}
\newcommand{\cS}{\mathcal{S}}
\newcommand{\Tr}{\textup{Tr}}
\newcommand{\Gal}{\textup{Gal}}
\title{The Weight Distribution of a Class of Cyclic Codes Related to Hermitian Forms Graphs}
\author{
Shuxing Li,
Sihuang Hu,
Tao Feng,
and
Gennian Ge
\thanks{
The research of S. Hu was supported by the Scholarship Award for Excellent
Doctoral Student granted by Ministry of Education of China.
The research of T. Feng was supported in part by the Fundamental Research Funds for the Central
Universities of China, Zhejiang Provincial Natural Science Foundation (LQ12A01019), National Natural Science
Foundation of China (11201418).
The research of G. Ge was supported by
the National Outstanding Youth Science Foundation of China under Grant
No.~10825103, National Natural Science Foundation of China
under Grant No.~61171198, and Specialized Research Fund for the
Doctoral Program of Higher Education.}
\thanks{The authors are with the Department of Mathematics, Zhejiang University, Hangzhou
310027, Zhejiang, China
(e-mail: lsxlsxlsx1987@sina.com; husihuang@zju.edu.cn;
         tfeng@zju.edu.cn; gnge@zju.edu.cn).}
}
\begin{document}
\maketitle

\begin{abstract}
The determination of weight distribution of cyclic codes involves evaluation
of Gauss sums and exponential sums. Despite of some cases where a neat
expression is available, the computation is generally rather complicated.
In this note, we determine the weight distribution of a class of reducible
cyclic codes whose dual codes may have arbitrarily many zeros.
This goal is achieved by building an unexpected connection between the
corresponding exponential sums and the spectrums of Hermitian forms graphs.
\end{abstract}
\begin{keywords}
Cyclic codes, Cayley graphs, Hermitian forms graphs, weight distribution.
\end{keywords}

\section{Introduction}
For a cyclic code $\cC$ of length $l$ over finite field $\Fp$ with $p$ prime, let $A_i$ be the number of
codewords in $\cC$ of Hamming weight $i$. The weight distribution $\{A_0,A_1,\ldots,A_l\}$
is an important research subject in coding theory.
Let $h(x)$ be the parity check polynomial of $\cC$. We say that $\cC$ is
irreducible (resp. reducible) if $h(x)$ is irreducible (resp. reducible) over $\Fp$.
When $h(x)=h_0(x)h_1(x)\cdots h_{s-1}(x)$ for some distinct irreducible polynomials
$h_i(x)$ over $\Fp$, the code $\cC$ is the dual of a cyclic code with $s$ zeros.


An identity due to McEliece \cite{Mc} shows that weights of irreducible cyclic codes
can be expressed via Gauss sums. So the determination of the weights of irreducible
cyclic codes can be tackled using number theoretic techniques (see \cite{FY,Mc,MR,Vl,Wo}).
However, this problem is extremely difficult in general since the same is
true for the evaluation of Gauss sums.
When an irreducible cyclic code has exactly one nonzero weight, a nice characterization
has been given in \cite{DY,Vega,VW}. Besides, the class of two-weight irreducible
cyclic codes was extensively studied. The necessary and sufficient conditions for an
irreducible cyclic code to have at most two weights were given by Schmidt and White \cite{SW}.
And they conjectured that all irreducible two-weight cyclic codes consist of
two infinite families and eleven sporadic examples.
The reader can get more information on the weight distribution of irreducible
cyclic codes in~\cite{DY}.

For reducible cyclic codes, the determination of weight distribution involves evaluation
of exponential sums. Despite of some cases where a neat expression is available (see \cite{DLMZ,FL,HX,LF1,LF2,LTW1,LTW2,MZLFD,M,ZLH}), the computation is generally
rather complicated. Although delicate techniques were applied to the computation,
most of these literature, to our knowledge, can only obtain the weight distribution
of reducible cyclic codes whose dual codes have two or three zeros.
The exponential sums which have been explicitly evaluated seem to
share a common feature that they attain only a few distinct
values.

In this paper, we determine the weight distribution of a class of reducible cyclic codes
whose dual codes may have arbitrarily many zeros. This goal is achieved by building a surprising
connection between the involved exponential sums and the spectrums of Hermitian forms graphs.
The rest of this paper is organized as follows. The codes we considered will be
introduced in Section \ref{sect:code}. A brief introduction to Cayley graphs and
Hermitian forms graphs is given in Section \ref{sect:CayAndHerm}. We build the connection
between exponential sums and spectrums of Hermitian forms graphs
in Section \ref{sect:weightDistri}. After presenting this connection, the weight distribution
follows immediately. A brief conclusion will be given in the last section.

\section{The code $\ca{C}_{(p,m)}$}\label{sect:code}
First we fix some notation. Let $p$ be a prime and $q=p^n$ with $n=2m$, where $m$ is odd.
Write $t=(m-1)/2$. Suppose $\pi$ is a primitive element of $\Fq$.
Let $h_0(x)$ be the minimal polynomial of $\pi^{-(p^m+1)}$ over $\Fp$. Then $\deg h_0(x)=m.$ Let $h_i(x)$ be
the minimal polynomial of $\pi^{-(p^{2i-1}+1)}$ over $\Fp$,
where $1\le i \le t$.
For any integer $l>1$ with $l|2m$,
we have $\pi^{-(p^{2i-1}+1)(p^{2m/l}-1)}\neq 1$, where $1\le i \le t$.
Thus $\deg h_i(x)=n$ for $1\le i \le t$.
Since for $1\le i<j\le t$, there does not exist any positive integer $k$
such that
$$
p^k(p^{2i-1}+1) \equiv p^{2j-1}+1 \pmod{q-1},
$$
the elements $\pi^{-(p^{2i-1}+1)}$ and $\pi^{-(p^{2j-1}+1)}$ have distinct minimal polynomials
over $\Fp$. So the polynomials $h_i(x)$ are distinct for $0 \le i \le t$.

Let $\ca{C}_{(p,m)}$ be the cyclic code with parity check polynomial $h_0(x)h_1(x)\cdots h_{t}(x)$
over $\Fp$.
Then the code $\ca{C}_{(p,m)}$ is the dual of a cyclic code with $t+1$ zeros
and $\textup{dim}{}_{\Fp} \ca{C}_{(p,m)}=m^2$.
Let $\Tr_{i}^{j}$ denote the trace mapping from $\mathbb{F}_{p^{j}}$ to $\mathbb{F}_{p^{i}}$.
The codewords in $\ca{C}_{(p,m)}$ can be expressed as
$$
\Bc_{[\alpha_0, \alpha_1,\ldots,\alpha_{t}]}=(c_0,c_1,\ldots,c_{q-2})\quad
(\alpha_0 \in \Fpm, \alpha_1,\ldots,\alpha_{t} \in \Fq)
$$
where
$$
c_i=\Tr_{1}^{m}(\alpha_0\pi^{i(p^{m}+1)})+\sum_{j=1}^{t}\Tr_{1}^{n}(\alpha_j\pi^{i(p^{2j-1}+1)}),
$$
for $0 \le i \le q-2$ (see \cite{Del}). Hence the Hamming weight of the codeword $\Bc_{[\alpha_0, \alpha_1,\ldots,\alpha_{t}]}$ is
\begin{equation*}
w_H(\Bc)=p^{n-1}(p-1)-\frac{1}{p}\sum_{a\in \Fp^*}T(a\alpha_0,a\alpha_1,\ldots,a\alpha_{t}),
\end{equation*}
where
$$
T(\alpha_0, \alpha_1,\ldots,\alpha_{t})=\sum_{x \in \Fq}\zeta_p^{\Tr_{1}^{m}(\alpha_0x^{p^{m}+1})+\sum_{j=1}^{t}\Tr_{1}^{n}(\alpha_jx^{p^{2j-1}+1})}.
$$

Generally speaking, it is very difficult to obtain the value distribution of
$T(\alpha_0, \alpha_1,\ldots,\alpha_{t})$ for $\alpha_0 \in \Fpm$, $\alpha_1,\ldots,\alpha_{t} \in \Fq$,
especially when $t$ is large.
In the next section, we will build a surprising connection between the multiset
$\{T(\alpha_0, \alpha_1,\ldots,\alpha_{t})\mid \alpha_0 \in \Fpm$, $\alpha_1,\ldots,\alpha_{t} \in \Fq\}$
and the eigenvalues of Hermitian forms graphs, which can reduce the complexity of
computation for $T(\alpha_0, \alpha_1,\ldots,\alpha_{t})$ remarkably.

\section{Cayley Graphs and Hermitian Forms Graphs}\label{sect:CayAndHerm}
Now we record some known results on Cayley graphs and Hermitian forms graphs.
\subsection{Cayley graphs}
Let $G$ be a finite group and $D\subset G$ be a subset.
The {\it Cayley graph} $Cay(G,D)$
on $G$ with connection set $D$ is the directed graph with vertex set $G$
and edge set $\{(g,h)\mid g,h \in G ,\; hg^{-1} \in D\}$.

Define $D^{(-1)}=\{d^{-1}\mid d \in D\}$. Then $Cay(G,D)$ is undirected if $D=D^{(-1)}$.
Furthermore, $Cay(G,D)$ is $k$-regular with $k=|D|$. If $G$ is a finite abelian group,
it is easy to compute the spectrum of $Cay(G,D)$.
For any character $\chi$ of $G$, define $\chi(D)=\sum_{d\in D}\chi(d)$. The character
group of $G$ is denoted by $\widehat{G}$, with $|\widehat{G}|=|G|$.

\begin{lemma}\label{lemma:spectrumOfCayleyGraph}
Let $\Gamma=Cay(G,D)$ be a Cayley graph on a finite abelian group $G$ with connection set $D$. Suppose $A=A(\Gamma)$ is the adjacency matrix of $\Gamma$. Then each character $\chi$ of $G$ corresponds to an eigenvector of $A$ with eigenvalue $\chi(D)$. In particular, the spectrum of $\Gamma$ is the multiset $\{\chi(D)\mid \chi \in \widehat{G}\}$.
\end{lemma}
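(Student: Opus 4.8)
The plan is to exhibit, for each character $\chi \in \widehat{G}$, an explicit eigenvector of $A$, to verify by a direct computation that its eigenvalue is $\chi(D)$, and then to argue that these eigenvectors are complete so that no eigenvalues are missed. First I would index the rows and columns of $A$ by the elements of $G$, so that $A_{g,h}=1$ precisely when $hg^{-1}\in D$ and $A_{g,h}=0$ otherwise, in accordance with the edge set in the definition of $Cay(G,D)$. For a fixed character $\chi$, I would then associate the column vector $v_\chi=(\chi(g))_{g\in G}\in\mathbb{C}^{|G|}$.

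Next I would compute the $g$-th coordinate of $Av_\chi$. We have
\[
(Av_\chi)_g=\sum_{h\in G}A_{g,h}\,\chi(h)=\sum_{\substack{h\in G\\ hg^{-1}\in D}}\chi(h).
\]
Reindexing the sum by $d=hg^{-1}$, so that $h=dg$ and $d$ runs over $D$, and using the multiplicativity $\chi(dg)=\chi(d)\chi(g)$, I would obtain
\[
(Av_\chi)_g=\sum_{d\in D}\chi(d)\chi(g)=\chi(g)\sum_{d\in D}\chi(d)=\chi(D)\,\chi(g).
\]
Since this holds for every $g\in G$, it shows $Av_\chi=\chi(D)\,v_\chi$, i.e. $v_\chi$ is an eigenvector of $A$ with eigenvalue $\chi(D)$, which is the first assertion.

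To conclude that the spectrum is exactly the multiset $\{\chi(D)\mid\chi\in\widehat{G}\}$, I would invoke that the $|\widehat{G}|=|G|$ characters of the finite abelian group $G$ are orthogonal with respect to the standard Hermitian inner product on $\mathbb{C}^{|G|}$ (the Schur orthogonality relations), hence the vectors $\{v_\chi\}_{\chi\in\widehat{G}}$ are linearly independent and in fact form an orthogonal basis of $\mathbb{C}^{|G|}$. As $A$ is a $|G|\times|G|$ matrix, this basis of eigenvectors accounts for all of its eigenvalues with the correct multiplicities, so the spectrum of $\Gamma$ is precisely the multiset $\{\chi(D)\mid\chi\in\widehat{G}\}$.

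The computation in the middle step is entirely routine once the reindexing $d=hg^{-1}$ is spotted; the point that requires a genuine input is the completeness claim in the last step. The main obstacle, or rather the only nontrivial ingredient, is the orthogonality of distinct characters of $G$, which guarantees that the $|G|$ eigenvectors $v_\chi$ are independent and therefore diagonalize $A$ completely. Without it one would only know that each $\chi(D)$ is \emph{some} eigenvalue of $A$, not that these values exhaust the spectrum with multiplicity.
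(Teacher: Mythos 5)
Your proposal is correct and follows essentially the same route as the paper: the same eigenvector $(\chi(g))_{g\in G}$, the same reindexing computation yielding eigenvalue $\chi(D)$, and the same completeness argument via linear independence of the $|G|$ character vectors. Your only addition is to make explicit (via Schur orthogonality) the independence claim that the paper states without proof.
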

\begin{proof}
Let $\chi$ be a character of $G$. Let $e_{\chi}$ be the column vector $(\chi(g))_{g \in G}$. For any $h\in G$, we have
$$
(Ae_{\chi})_h=\sum_{g\thicksim h}\chi(g)=\left(\sum_{d\in D}\chi(d)\right)\chi(h)=\chi(D)\chi(h).
$$
Hence, $e_{\chi}$ is an eigenvector of $A$ with eigenvalue $\chi(D)$. All characters in $\widehat{G}$ give rise to $|G|$ linearly independent eigenvectors, thus one obtains the spectrum of Cayley graph $\Gamma$ via the character group $\widehat {G}$.
\end{proof}

\subsection{Hermitian forms graphs}
Let $V=\Frtwo^d$, where $r$ is a prime power. For any $x \in \Frtwo$,
its conjugate $\overline{x}$ is defined by $\overline{x}=x^r$.
A matrix $H$ over $\Frtwo$ is called Hermitian if $H=H^{{}^*}$,
where $H^{{}^*}$ is the conjugate transpose of $H$.
Let $\cH$ denote the abelian group formed by all $d\times d$ Hermitian matrices over $\Frtwo$
under the matrix addition. Clearly, we have $|\cH|=r^{d^2}$.
The {\it Hermitian forms graph} on $V$ is the graph whose vertices are the
elements of $\cH$ and in which $H_1,H_2\in \cH$ are adjacent whenever $rank(H_1-H_2)=1$.
Equivalently, the Hermitian forms graph is the Cayley graph $Cay(\cH,\cD)$,
where $\cD=\{H\in\cH\mid rank(H)=1\}$.
A $d\times d$ Hermitian matrix $H$ of rank $1$
can be written as $H=\Ba^T\overline{\Ba}$, where $\Ba=(a_1,\ldots,a_d)\in V$
and $\overline{\Ba}=(\overline{a_1},\overline{a_2},\ldots,\overline{a_d})$.
Since for any $\Ba,\Bb \in V$, $\Ba^T\overline{\Ba}=\Bb^T\overline{\Bb}$ if and only if $\Ba=\gamma\Bb$ for some $(r+1)$-th root of unity $\gamma$, we have
$|\cD|=(r^{2d}-1)/(r+1)$.

It is well known that the Hermitian forms graph
on $V$ is a distance regular graph with classical parameters
$(d,b,\alpha,\beta)=(d,-r,-r-1,-(-r)^d-1)$ \cite[Table 6.1]{BCN}.
The eigenvalues of the Hermitian forms graph were first computed by Stanton \cite{Stan}.
Here we quote the more accessible formulas given in \cite{BCN}.
For any integers $j\ge i\ge 0$ and $b\neq 0,1$, the Gaussian binomial coefficients with basis $b$ are defined by
\begin{eqnarray*}
\bigg[ {j \atop i}\bigg]_b=\left\{
                            \begin{array}{ll}
                             \prod_{l=0}^{i-1}\frac{b^j-b^l}{b^i-b^l} & \text{if $i\ge 1$},\\
                             1                                        & \text{if $i = 0$}.
                            \end{array}
                            \right.
\end{eqnarray*}

\begin{lemma}\cite[Corollary 8.4.4]{BCN}
\label{lemma:eigenOfHermGraph}
Let $V=\Frtwo^d$, where $r$ is a prime power. The Hermitian forms graph
defined on $V$ has eigenvalues
$$
\theta_0=\frac{r^{2d}-1}{r+1}, \quad \theta_j=\frac{r^{2d}-1}{r+1}+(-r)^{2d-j}\bigg[ {j \atop 1}\bigg]_{(-r)},
$$
for $1 \le j \le d$. Their corresponding multiplicities are
$$
f_0=1, \quad f_j=\bigg[ {d \atop j}\bigg]_{(-r)}\prod_{l=0}^{j-1}[(-1)^{d+1}r^d+(-1)^{l+1}r^l],
$$
where $1\le j\le d$.
\end{lemma}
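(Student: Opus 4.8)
The plan is to read off both the eigenvalues and the multiplicities directly from the Cayley graph description $Cay(\cH,\cD)$ via Lemma~\ref{lemma:spectrumOfCayleyGraph}, rather than invoking the general theory of distance-regular graphs with classical parameters. By that lemma the spectrum is the multiset $\{\chi(\cD)\mid\chi\in\widehat{\cH}\}$, so everything reduces to evaluating the character sums $\chi(\cD)=\sum_{H\in\cD}\chi(H)$. I would first make $\widehat{\cH}$ explicit. For Hermitian $M,H$ the ordinary matrix trace $\mathrm{tr}(MH)$ is fixed by conjugation, hence lies in $\mathbb{F}_r$, and $(M,H)\mapsto\mathrm{tr}(MH)$ is a nondegenerate symmetric $\mathbb{F}_r$-bilinear form on the $d^2$-dimensional space $\cH$. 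Fixing a nontrivial additive character $\psi$ of $\mathbb{F}_r$, the maps $\chi_M(H)=\psi(\mathrm{tr}(MH))$ then run through all of $\widehat{\cH}$ exactly once as $M$ ranges over $\cH$.

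Next I would reduce $\chi_M(\cD)$ to a Hermitian-form exponential sum. Writing a typical element of $\cD$ as $H=\Ba^T\overline{\Ba}$ gives $\mathrm{tr}(MH)=\overline{\Ba}\,M\,\Ba^T=:Q_M(\Ba)$, the Hermitian form attached to $M$. Since each rank-one matrix arises from exactly $r+1$ nonzero vectors $\Ba$ and $Q_M$ is constant on these fibres, one obtains
\[
\chi_M(\cD)=\frac{1}{r+1}\bigl(S(M)-1\bigr),\qquad S(M)=\sum_{\Ba\in V}\psi\bigl(Q_M(\Ba)\bigr).
\]
The structural point is that $S(M)$ depends only on $j:=rank(M)$: a congruence $M\mapsto P^{{}^*}MP$ with $P$ invertible permutes $V$ bijectively via $\Ba\mapsto\Ba P^T$ and preserves the rank, so one may diagonalize $Q_M$ and split off the $d-j$ radical variables, each contributing a factor $r^2$. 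Combined with the elementary evaluation $\sum_{x\in\Frtwo}\psi(c\,x^{r+1})=-r$ (using that the norm map is $(r+1)$-to-one onto $\mathbb{F}_r^*$), this yields $S(M)=(-r)^j r^{2(d-j)}=(-r)^{2d-j}$, whence $\chi_M(\cD)=\bigl((-r)^{2d-j}-1\bigr)/(r+1)$. A short rewriting of $\big[{j\atop 1}\big]_{(-r)}$ shows this equals the claimed $\theta_j$, and $j=0$ recovers $\theta_0$.

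Finally, the multiplicity $f_j$ must equal the number of characters $\chi_M$ producing the eigenvalue $\theta_j$, i.e. the number of rank-$j$ Hermitian matrices over $\Frtwo$. I would count these by first choosing the row space --- a $j$-dimensional subspace of $V$ --- and then the number of nondegenerate Hermitian forms supported on it, so that $f_j$ factors as a Gaussian binomial coefficient times a unitary-group order. I expect this last step to be the main obstacle: the counts come out naturally in base $r^2$ (or in terms of $|\mathrm{GU}_j(r)|$), and the real work is to massage them into the alternating base-$(-r)$ form $\big[{d\atop j}\big]_{(-r)}\prod_{l=0}^{j-1}[(-1)^{d+1}r^d+(-1)^{l+1}r^l]$, the sign bookkeeping from passing $r^2\to -r$ being the delicate part; the eigenvalue computation above is comparatively routine once the character pairing is fixed. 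As cross-checks one can verify $\sum_j f_j=r^{d^2}=|\cH|$ together with the trace conditions $\sum_j f_j\theta_j=0$ and $\sum_j f_j\theta_j^2=|\cH|\,|\cD|$.
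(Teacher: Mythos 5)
Your proposal is correct, but it takes a genuinely different route from the paper: the paper does not prove this lemma at all --- it is quoted verbatim from \cite[Corollary 8.4.4]{BCN} (the spectrum being originally due to Stanton), i.e.\ it rests on the general theory of distance-regular graphs with classical parameters. You instead give a self-contained derivation: identify $\widehat{\cH}$ with $\cH$ via the nondegenerate pairing $(M,H)\mapsto\psi(\mathrm{tr}(MH))$, reduce each eigenvalue $\chi_M(\cD)$ to the exponential sum $S(M)=\sum_{\Ba\in V}\psi(\overline{\Ba}M\Ba^T)$ using the $(r+1)$-to-one parametrization of rank-one matrices, evaluate $S(M)=(-r)^{2d-j}$ by congruence-diagonalization (valid since Hermitian forms over finite fields are classified by rank alone), and get $\chi_M(\cD)=\big((-r)^{2d-j}-1\big)/(r+1)$, which indeed equals $\theta_j$ upon expanding $\big[{j\atop 1}\big]_{(-r)}$; this part of your argument is complete and correct. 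Your route buys two things the citation does not: it makes the paper self-contained, and it exhibits the formal self-duality explicitly --- the multiplicity of $\theta_j$ is exactly the number of rank-$j$ Hermitian matrices --- using precisely the same character-sum technique the paper itself deploys in Section \ref{sect:weightDistri}. The one step you leave open (matching the rank count to the stated $f_j$) closes more easily than you fear. The count is $\big[{d\atop j}\big]_{r^2}$ choices of radical times $|\mathrm{GL}_j(\Frtwo)|/|\mathrm{GU}_j(r)|=r^{j(j-1)/2}\prod_{k=1}^{j}(r^k+(-1)^k)$ nondegenerate forms on the quotient; writing $s=-r$, both this count and the stated $f_j$ equal $(-1)^js^{j(j-1)/2}$ times the two sides of
$$
\bigg[{d\atop j}\bigg]_{s^2}\prod_{k=1}^{j}(s^k+1)
=\prod_{k=1}^{j}\frac{s^{2(d-j+k)}-1}{s^k-1}
=\bigg[{d\atop j}\bigg]_{s}\prod_{k=1}^{j}(s^{d-j+k}+1),
$$
which is immediate from $s^{2i}-1=(s^i-1)(s^i+1)$; here one uses $(-1)^{d+1}r^d+(-1)^{l+1}r^l=-(s^d+s^l)=-s^l(s^{d-l}+1)$ and the reindexing $l=j-k$. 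Your trace cross-checks $\sum_jf_j=|\cH|$, $\sum_jf_j\theta_j=0$, $\sum_jf_j\theta_j^2=|\cH||\cD|$ are also sound and worth keeping.
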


\section{The weight distribution of the code $\ca{C}_{(p,m)}$}\label{sect:weightDistri}
Throughout this section, $p,q,n,m,t$ are defined as in Section \ref{sect:code}.
Consider the abelian group
$$
\cG=\Fpm \times \underbrace{\Fq \times \Fq \times \cdots \times \Fq}_{t},
$$
and its subset
$$
\cS=\big\{(x^{p^m+1},x^{p+1},x^{p^{3}+1},\ldots,x^{p^{m-2}+1})\mid x\in \Fq^*\big\}.
$$
It is easy to see that $|\cS|=(q-1)/(p+1)$.
Let $W=\Fptwo^m$ and $\cH$ be the abelian group consisting of all
$m \times m$ Hermitian matrices over $\Fptwo$.
Let $\cD=\{H \in \cH \mid rank(H)=1\}.$ Clearly, the Hermitian forms graph
on $W$ is the Cayley graph $Cay(\cH,\cD)$.
The following lemma shows that the Cayley graph $Cay(\cG,\cS)$
shares the same spectrum with $Cay(\cH,\cD)$.

\begin{lemma}\label{lemma:isomorphism}
For odd $m$, the Hermitian forms graph $\Gamma_1$ on $W=\Fptwo^m$ is isomorphic to
the Cayley graph $\Gamma_2=Cay(\cG,\cS)$. In particular, $\Gamma_1$ and $\Gamma_2$ have the same spectrum.
\end{lemma}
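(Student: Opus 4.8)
The plan is to upgrade the conclusion from ``same spectrum'' to an explicit graph isomorphism, by producing a group isomorphism $\psi\colon \cH \to \cG$ that carries the connection set $\cD$ onto $\cS$. It is elementary that such a $\psi$ induces an isomorphism $Cay(\cH,\cD)\cong Cay(\cG,\cS)$, and isomorphic graphs share a spectrum, so the whole statement follows. Both $\cH$ and $\cG$ are elementary abelian $p$-groups of order $p^{m^2}$ (for $\cG$ one checks $m+2mt=m^2$), so $\psi$ will be built as an $\Fp$-linear bijection.

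First I would set up a field model of the Hermitian space. Since $m$ is odd, $\Fptwo\subseteq\Fq$ and $K:=\Fq$ has dimension $m$ over $\Fptwo$, while $\tau\colon z\mapsto z^{p^m}$ is an involution of $K$ restricting to $a\mapsto a^p$ on $\Fptwo$. I would verify that $h(x,y)=\Tr_2^n(x^{p^m}y)$ is a nondegenerate Hermitian form on $K$ relative to $\tau$, and fix an orthonormal basis, so that the standard Hermitian inner product of the coordinate vectors of $x,y\in K$ equals $\Tr_2^n(x^{p^m}y)$. To each $H\in\cH$ I attach the $\Fp$-valued Hermitian square $f_H(y)=\overline{\By}\,H\,\By^{T}$; the map $\iota\colon H\mapsto f_H$ is $\Fp$-linear, and it is injective because one recovers the full form (hence $H$) from the values $f_H(x+\lambda y)$, $\lambda\in\Fptwo$, using nondegeneracy of $\Tr_1^2$, an argument valid in every characteristic. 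Thus $\operatorname{im}\iota$ has $\Fp$-dimension $m^2$.

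The core computation is to evaluate $f_H$ on a rank-one matrix $H=\Ba^{T}\overline{\Ba}\leftrightarrow x$, where $f_H(y)=\overline{h(x,y)}\,h(x,y)=N_{\Fptwo/\Fp}\big(\Tr_2^n(x^{p^m}y)\big)$. Expanding $\Tr_2^n$ as $\sum_{l}(x^{p^m}y)^{p^{2l}}$ and multiplying by its $p$-th power, I would regroup the monomials $y^{p^{2l}+p^{2l'+1}}$ by their reduced exponent $1+p^{k}$ ($k$ odd), summing each Frobenius orbit to $\Tr_2^n\big((x^{p^k+1})^{p^m}y^{p^k+1}\big)$. Pairing $k$ with $2m-k$ turns conjugate pairs into $\Trn$ terms and leaves the self-paired value $k=m$; there the key observation is that $y^{p^m+1}=N_{K/\Fpm}(y)\in\Fpm$ and $x^{p^m+1}\in\Fpm$, so that term collapses (using that $p^2$ generates $\Gal(\Fpm/\Fp)$ because $m$ is odd) to $\Trm(x^{p^m+1}y^{p^m+1})$. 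The upshot is
\[
f_H(y)=\Trm(\alpha_0\,y^{p^m+1})+\sum_{j=1}^{t}\Trn(\alpha_j\,y^{p^{2j-1}+1}),
\]
with $\alpha_0=x^{p^m+1}\in\Fpm$ and $\alpha_j=(x^{p^{2j-1}+1})^{p^m}\in\Fq$; crucially $(\alpha_0,\dots,\alpha_t)$ lies in $\cS$ with representative $x^{p^m}$.

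To finish I would let $\rho\colon\cG\to\operatorname{Maps}(K,\Fp)$ send $(\alpha_0,\dots,\alpha_t)$ to the displayed function. Since the rank-one Hermitian matrices $\Fp$-span $\cH$, linearity together with the rank-one computation gives $\operatorname{im}\iota\subseteq\operatorname{im}\rho$; because $\dim_{\Fp}\operatorname{im}\iota=m^2=\dim_{\Fp}\cG\ge\dim_{\Fp}\operatorname{im}\rho$, the two images coincide and $\rho$ is injective. Hence $\psi:=\rho^{-1}\iota\colon\cH\to\cG$ is a well-defined group isomorphism, and the rank-one computation shows $\psi(\cD)\subseteq\cS$, with equality forced by $|\cD|=(q-1)/(p+1)=|\cS|$. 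I expect the main obstacle to be the monomial bookkeeping in the core computation, in particular isolating the self-paired exponent $k=m$ and recognizing it as a norm into $\Fpm$, which is exactly what dictates that the first coordinate of $\cG$ be $\Fpm$ rather than $\Fq$; the group structure, the injectivity of $\iota$, and the final dimension count are then routine.
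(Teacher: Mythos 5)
Your proposal is correct, and I checked the monomial bookkeeping in your core computation: expanding $N_{\Fptwo/\Fp}\big(\Tr_2^n(x^{p^m}y)\big)$ over Frobenius orbits, the exponents $k$ and $2m-k$ do pair up into the $\Trn$ terms, the self-paired exponent $k=m$ collapses to $\Trm(x^{p^m+1}y^{p^m+1})$ precisely because $m$ is odd, and $(\alpha_0,\ldots,\alpha_t)$ indeed lies in $\cS$ with representative $x^{p^m}$. Your skeleton coincides with the paper's --- both produce a group isomorphism from $\cH$ to $\cG$ carrying $\cD$ into $\cS$, then force equality of connection sets from $|\cD|=(q-1)/(p+1)=|\cS|$ --- but the construction of the isomorphism is genuinely different. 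The paper fixes an arbitrary basis $\Be$ of $\Fq$ over $\Fptwo$ and defines the map explicitly by evaluating the bilinear form at Frobenius twists of that basis, $\varphi(H)=(\Be H(\Be^{p^m})^T,\Be H(\Be^{p})^T,\ldots,\Be H(\Be^{p^{m-2}})^T)$; injectivity is the nonsingularity of a Moore-type matrix (citing Lidl and Niederreiter), bijectivity follows from $|\cH|=|\cG|=p^{m^2}$, and the rank-one image is a one-line factorization, valid because the entries of $\Ba$ lie in $\Fptwo$ and every exponent $p^k$ used has $k$ odd. You instead build $\psi=\rho^{-1}\iota$ implicitly, identifying $\cH$ with a space of $\Fp$-valued functions via the Hermitian square $f_H$ (injective by polarization, correctly noted to survive characteristic $2$) and $\cG$ via the trace expression $\rho$, with bijectivity obtained from a dimension count; this needs extra standard inputs the paper avoids (existence of an orthonormal basis for $h$, the fact that rank-one Hermitian matrices span $\cH$ over $\Fp$, surjectivity of the norm), and the heavy lifting sits in the norm-trace expansion rather than in a matrix identity. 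The trade-off: the paper's route is shorter and more economical; yours is more conceptual, since it exhibits the defining trace expression of $\cS$ (and hence of the code) as the Hermitian square of a rank-one form, which explains why the exponents $p^{2j-1}+1$ occur and why the first coordinate must live in $\Fpm$, rather than having these emerge from a clever choice of evaluation points.
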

\begin{proof}
Since $\Gamma_1$ is just the Cayley graph $Cay(\cH,\cD)$, the result will
follow immediately if we can find
a group isomorphism $\varphi$ from $\cH$ to $\cG$ satisfying that $\varphi(\cD)=\cS$.

Let $e_1,\ldots,e_m$ be a basis of $\Fq$ over $\Fptwo$ and $\Be=(e_1,e_2,\ldots,e_m)$.
For any $H\in\cH$ and $\Bx,\By \in \Fq^m$, we define $f_H(\Bx,\By)=\Bx H\By^T$, where $\By^T$ is the transpose of $\By$.
Now we construct a mapping $\varphi$ from $\cH$ to $\cG$ by sending $H\in\cH$ to
$$
\varphi(H)=(f_H(\Be,\Be^{p^m}),f_H(\Be,\Be^{p}),f_H(\Be,\Be^{p^3}),\ldots,f_H(\Be,\Be^{p^{m-2}})),
$$
where $\Be^{s}:=(e_1^s,e_2^s,\ldots,e_m^s)$ for any integer $s$. It is straightforward to verify that $\varphi(H) \in \cG$ and $\varphi$ is a group homomorphism.

Now we want to show that $\varphi$ is an isomorphism.
First, we prove that $\varphi$ is injective. For a matrix $H=(h_{ij})$ and an integer $s$,
we denote $H{}^{s}=(h_{ij}^s)$.
Suppose $\varphi(H)=(0,0,\ldots,0)$, i.e., $\Be H(\Be^{p^m})^T=f_H(\Be,\Be^{p^m}) =0$ and
$\Be H(\Be^{p^{2i-1}})^T=f_H(\Be,\Be^{p^{2i-1}})=0$ for $1\le i\le t$.
By rising all entries of $\Be H(\Be^{p^{2i-1}})^T$ to their $p^{2m-2i+1}$-th power, we have
$$
\Be^{p^{2m-2i+1}} H^{{}^{p^{2m-2i+1}}}\Be^T=\Be^{p^{2m-2i+1}} H^{{}^p}\Be^T=0,
$$
which gives
$$
\Be H(\Be^{p^{2m-2i+1}})^T=0.
$$
So we obtain $\Be H\Psi=(0,0,\ldots,0)$, where
$$
\Psi=(({\Be^p})^T,({\Be^{p^3}})^T,\cdots,({\Be^{p^m}})^T,\cdots,(\Be^{p^{2m-3}})^T,(\Be^{p^{2m-1}})^T).
$$
By the choice of $\Be$, it can be shown that $\Psi$ is a nonsingular matrix \cite[Corollary 2.38]{LN}.
Therefore $\Be H=(0,0,\ldots,0)$, which implies that $H$ is a zero matrix. Consequently, $\varphi$ is injective. On the other hand, a direct calculation shows that $|\cH|=|\cG|=p^{m^2}$. Hence $\varphi$ is an isomorphism.

For any $H \in \cD$, we have $H=\Ba^{T}\Ba^p$ for some $\Ba=(a_1,a_2,\ldots,a_m)\in W$, where $\Ba^p=(a_1^p,a_2^p,\ldots,a_m^p)$.
Therefore,
\begin{eqnarray*}
\varphi(H)&=&(\Be H(\Be^{p^m})^T,\Be H(\Be^{p})^T,\ldots,\Be H(\Be^{p^{m-2}})^T)\\
            &=&(\Be\Ba^{T}\Ba^p(\Be^{p^m})^T,\Be\Ba^{T}\Ba^p(\Be^{p})^T,\ldots,
                \Be\Ba^{T}\Ba^p(\Be^{p^{m-2}})^T)\\
            &=&(x^{p^m+1}, x^{p+1},\ldots, x^{p^{m-2}+1}),
\end{eqnarray*}
where $x=\Be \Ba^{T} \in \Fq^{*}$. Thus we obtain $\varphi(\cD)\subset \cS$. Since
$$|\cD|=(p^{2m}-1)/(p+1)=(q-1)/(p+1)=|\cS|,$$
we have $\varphi(\cD)=\cS$. So we have proved that $\varphi$ is an isomorphism from $\cH$ to $\cG$
sending the connection set $\cD$ to $\cS$. Therefore $\Gamma_1$ is isomorphic to $\Gamma_2$,
and they have the same spectrum.
\end{proof}

From Lemma \ref{lemma:isomorphism} and Lemma~\ref{lemma:eigenOfHermGraph},
the eigenvalues of $\Gamma_2$ and their multiplicities are known.
On the other hand, the eigenvalues of $\Gamma_2$ can be expressed using
Lemma~\ref{lemma:spectrumOfCayleyGraph}.

Note that
$$
\widehat{\cG}=\{\chi_{(\alpha_0,\alpha_1,\ldots,\alpha_t)}\mid\alpha_0\in\Fpm,\alpha_1,\ldots,\alpha_{t}\in\Fq\},
$$
where
$$
\chi_{(\alpha_0,\alpha_1,\ldots,\alpha_t)}(u)=\zeta_p^{\Tr_{1}^{m}(\alpha_0u_0)+\sum_{j=1}^{t}\Tr_{1}^{n}(\alpha_ju_j)},
$$
for any $u=(u_0,u_1,\ldots,u_t) \in \cG$. By Lemma~\ref{lemma:spectrumOfCayleyGraph}, the eigenvalues of $\Gamma_2$ are
\begin{eqnarray*}
\lefteqn{\chi_{(\alpha_0,\alpha_1,\ldots,\alpha_t)}(\cS)}\\\\
&=&\sum_{u\in \cS}\zeta_p^{\Tr_{1}^{m}(\alpha_0u_0)+\sum_{j=1}^{t}\Tr_{1}^{n}(\alpha_ju_j)}\\
&=&\frac{1}{p+1}\sum_{x\in\Fq^*}\zeta_p^{\Tr_{1}^{m}(\alpha_0x^{p^{m}+1})+\sum_{j=1}^{t}\Tr_{1}^{n}(\alpha_jx^{p^{2j-1}+1})}\\
&=&\frac{1}{p+1}\big(T(\alpha_0,\alpha_1,\ldots,\alpha_{t})-1\big),
\end{eqnarray*}
for all $\alpha_0 \in \Fpm, \alpha_1,\ldots,\alpha_{t} \in \Fq.$
Therefore we have
$$
T(\alpha_0,\alpha_1,\ldots,\alpha_{t})=(p+1)\chi_{(\alpha_0,\alpha_1,\ldots,\alpha_t)}(\cS)+1.
$$
By Lemma~\ref{lemma:eigenOfHermGraph}, the eigenvalues of $\Gamma_2$ are all rational numbers.
Thus, we have $T(\alpha_0,\alpha_1,\ldots,\alpha_{t})\in \Q$.
For any $a\in \Fp^*$, there exists an automorphism $\sigma_a \in \Gal(\Q(\zeta_p)/\Q)$
with $\sigma_a(\zeta_p)=\zeta_p^a$. Hence
\begin{eqnarray*}
\sum_{a\in\Fp^*}T(a\alpha_0,a\alpha_1,\ldots,a\alpha_{t})
&=&\sum_{a\in\Fp^*}\sigma_a\big(T(\alpha_0,\alpha_1,\ldots,\alpha_{t})\big)\\
&=&(p-1)T(\alpha_0,\alpha_1,\ldots,\alpha_{t}).
\end{eqnarray*}
Consequently, the Hamming weight of $\Bc_{[\alpha_0, \alpha_1,\ldots,\alpha_{t}]}$ is
\begin{equation}\label{equ:weight2}
\begin{split}
w_H(\Bc)&=p^{n-1}(p-1)-\frac{1}{p}\sum_{a\in \Fq^*}T(a\alpha_0,a\alpha_1,\ldots,a\alpha_{t})\\
      &=p^{n-1}(p-1)-\frac{p-1}{p}\,T(\alpha_0,\alpha_1,\ldots,\alpha_{t})\\
      &=p^{n-1}(p-1)-\frac{p-1}{p}\big(1+(p+1)\chi_{(\alpha_0,\alpha_1,\ldots,\alpha_t)}(\cS)\big).
\end{split}
\end{equation}
Now the weight distribution of the code $\ca{C}_{(p,m)}$ follows directly from
Equation~(\ref{equ:weight2}) and  Lemma~\ref{lemma:eigenOfHermGraph}. In the following theorem, we use $[l,k,d]$ code as the notation for a $k$-dimensional linear code of length $l$ with minimum distance $d$.
\begin{theorem}\label{thm:weightDist}
For any odd integer $m$, the weight distribution of  the code  $\ca{C}_{(p,m)}$ is as follows:
$$
A_i=\left\{\begin{array}{ll}
    1   & \textup{ if } i=0,\\
    f_j & \textup{ if } i=w_j,\\
    0   & \textup{otherwise},
    \end{array}
    \right.
$$
where
$$
w_j=(p^{2m}-p^{2m-1})\left(1-\frac{1}{(-p)^j}\right),
$$
and
$$
f_j=\bigg[ {m \atop j}\bigg]_{(-p)}\prod_{l=0}^{j-1}\big(p^m-(-p)^l\big)
$$
for $1\le j\le m$. In particular, the code $\ca{C}_{(p,m)}$ is a
$[p^{2m}-1,m^2,(p^{2m}-p^{2m-1})(1-p^{-2})]$ cyclic code.
\end{theorem}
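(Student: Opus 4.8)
The plan is to combine the weight formula~(\ref{equ:weight2}) with the explicit spectrum supplied by Lemma~\ref{lemma:isomorphism} and Lemma~\ref{lemma:eigenOfHermGraph}. First I would observe that the codewords of $\ca{C}_{(p,m)}$ are indexed by the tuples $(\alpha_0,\alpha_1,\ldots,\alpha_t)$ with $\alpha_0\in\Fpm$ and $\alpha_1,\ldots,\alpha_t\in\Fq$, and that each such tuple is precisely the parameter of a character $\chi_{(\alpha_0,\ldots,\alpha_t)}\in\widehat{\cG}$. Since a dimension count gives $|\cG|=p^{m^2}=|\ca{C}_{(p,m)}|$, this indexing is a genuine bijection, so the number of codewords of a prescribed weight equals the number of characters $\chi$ yielding the corresponding value of $\chi(\cS)$. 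By~(\ref{equ:weight2}) the weight of a codeword depends on its tuple only through the eigenvalue $\chi_{(\alpha_0,\ldots,\alpha_t)}(\cS)$; hence the entire weight distribution is determined by the spectrum of $\Gamma_2=Cay(\cG,\cS)$, counted with multiplicity.

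Next I would invoke Lemma~\ref{lemma:isomorphism} to replace the spectrum of $\Gamma_2$ by that of the Hermitian forms graph $\Gamma_1$ on $W=\Fptwo^m$, and then read off the distinct eigenvalues $\theta_0,\theta_1,\ldots,\theta_m$ and their multiplicities $f_0,f_1,\ldots,f_m$ from Lemma~\ref{lemma:eigenOfHermGraph} with $r=p$ and $d=m$. The eigenvalue $\theta_0=(p^{2m}-1)/(p+1)$ arises exactly from the all-zero tuple, i.e.\ the zero codeword, giving $A_0=1$. Because $m$ is odd, $(-1)^{m+1}=1$, so each factor $(-1)^{m+1}p^m+(-1)^{l+1}p^l$ in $f_j$ collapses to $p^m-(-p)^l$, which is precisely the product appearing in the stated multiplicities; thus the $f_j$ require no work beyond this sign simplification.

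The one genuine computation is the evaluation of the weights $w_j$. For $1\le j\le m$ I would substitute $\chi(\cS)=\theta_j$ into~(\ref{equ:weight2}) and simplify $1+(p+1)\theta_j$. Using $\bigl[{j\atop1}\bigr]_{(-p)}=\frac{(-p)^j-1}{-p-1}$ together with $(p+1)/(-p-1)=-1$ and $(-p)^{2m}=p^{2m}$, the bracketed quantity telescopes to
$$
1+(p+1)\theta_j=(-p)^{2m-j}.
$$
Feeding this back into~(\ref{equ:weight2}), recalling $n=2m$, and writing $(-p)^{2m-j}/p=(-1)^jp^{2m-1-j}$ yields
$$
w_j=(p^{2m}-p^{2m-1})\Bigl(1-\tfrac{1}{(-p)^j}\Bigr),
$$
exactly as claimed. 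The only delicate point is the bookkeeping of signs coming from the base $-p$ in the Gaussian binomial and in $(-p)^{2m-j}$; this is the main (though entirely routine) obstacle.

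Finally, to identify the minimum distance I would note that the positive factor $p^{2m}-p^{2m-1}$ is common to all $w_j$, so $w_j$ is smallest exactly when $(-p)^{-j}$ is largest. For odd $j$ the term $(-p)^{-j}$ is negative, forcing $w_j>p^{2m}-p^{2m-1}$, whereas for even $j$ it is positive and strictly decreasing in $j$; hence the smallest weight occurs at $j=2$, giving minimum distance $w_2=(p^{2m}-p^{2m-1})(1-p^{-2})$. Combined with the length $p^{2m}-1$ and $\dim_{\Fp}\ca{C}_{(p,m)}=m^2$ established in Section~\ref{sect:code}, this produces the claimed $[p^{2m}-1,\,m^2,\,(p^{2m}-p^{2m-1})(1-p^{-2})]$ parameters.
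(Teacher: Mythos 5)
Your proposal is correct and follows essentially the same route as the paper, which derives Equation~(\ref{equ:weight2}), invokes Lemma~\ref{lemma:isomorphism} and Lemma~\ref{lemma:eigenOfHermGraph}, and then simply asserts that the theorem "follows directly." Your write-up supplies exactly the routine details the paper omits: the bijection between tuples and codewords, the simplification $1+(p+1)\theta_j=(-p)^{2m-j}$, the sign collapse $(-1)^{m+1}p^m+(-1)^{l+1}p^l=p^m-(-p)^l$ for odd $m$, and the identification of $w_2$ as the minimum distance.
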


For a code $\cC$ with weight distribution $\{A_0,A_1,\ldots,A_l\}$, define its {\it weight enumerator} as
$$
\sum_{i=0}^{l} A_i x^i.
$$
The weight enumerator provides a succinct way to express the weight distribution. For the purpose of illustration, we give two examples below.

\begin{example}
Let $p=3,$ and $m=3.$ The code $\ca{C}_{(3,3)}$ is then a $[728,9,432]$
code over GF$(3)$ with the weight enumerator
$$
1+5460x^{432}+14040x^{504}+182x^{648}.
$$
\end{example}

\begin{example}
Let $p=2,$ and $m=5.$ The code $\ca{C}_{(2,5)}$ is then a $[1023,25,384]$
code over GF$(2)$ with the weight enumerator
\begin{align*}
&1+57970x^{384}+12985280x^{480}+18887680x^{528}\\
&+1623160x^{576}+341x^{768}.
\end{align*}
\end{example}

\section{Conclusion}\label{sect:conclusion}

In the study of cyclic codes, researchers have established the connections between their weight distribution and other mathematical objects, such as Gauss sums (see~\cite{FM,Moi}), algebraic curves (see~\cite{Schoof,Vl,WTQYX}), as well as quadratic forms (see~\cite{FL,LF1,LF2}). In this paper, we found an elegant connection between the weight distribution of a class of cyclic codes and the spectrums of certain distance regular graphs. In this way, the weight distribution of these codes follows from the known spectrums of Hermitian forms graphs. The dual codes of this family of cyclic codes may have arbitrarily many zeros, while most previously known results are obtained in the case where the dual codes have no more than three zeros.

\section*{Acknowledgement}

The authors express their gratitude to the two anonymous
reviewers for their detailed and constructive comments which
are very helpful to the improvement of this paper, and to Dr. Navin Kashyap,
the Associate Editor, for his excellent editorial job.

\bibliographystyle{IEEEtranS}
\bibliography{Reference}
\end{document}